%%%%%%%%%% espcrc1.tex %%%%%%%%%%
%
% $Id: espcrc1.tex,v 1.2 2004/02/24 11:22:11 spepping Exp $
%
\documentclass[fleqn,12pt,twoside]{article}
\usepackage{tipa}

%\usepackage{espcrc1}

%Use the option 'headings' if you want running headings
\usepackage[headings]{espcrc1}
%\texttt{espcrc}
% identification
\readRCS
$Id: espcrc1.tex,v 1.2 2004/02/24 11:22:11 spepping Exp $
\ProvidesFile{espcrc1.tex}[\filedate \space v\fileversion
     \space Elsevier 1-column CRC Author Instructions]

% change this to the following line for use with LaTeX2.09
% \documentstyle[12pt,twoside,fleqn,espcrc1]{article}

% if you want to include PostScript figures
\usepackage{graphicx}
% if you have landscape tables
\usepackage[figuresright]{rotating}

% put your own definitions here:
%   \newcommand{\cZ}{\cal{Z}}
%   \newtheorem{def}{Definition}[section]
%   ...
\newtheorem{theorem}{Theorem}

\newtheorem{corollary}{Corollary}

\newtheorem{lemma}{Lemma}

\newtheorem{remark}{Remark}

\newenvironment{proof}[1][Proof.]{\begin{trivlist}
\item[\hskip \labelsep {\bfseries #1}]}{\end{trivlist}}

\newcommand{\AmS}{{\protect\the\textfont2
  A\kern-.1667em\lower.5ex\hbox{M}\kern-.125emS}}

% add words to TeX's hyphenation exception list
\hyphenation{author another created financial paper re-commend-ed Post-Script}

%\usepackage{theorem}
% set the starting page if not 1
% \setcounter{page}{17}
\usepackage{amsmath}
\usepackage{amsfonts}
% declarations for front matter
\title{On one-sided interval edge colorings of biregular bipartite
graphs}

\author{R.R. Kamalian\address[MCSD]{Institute for Informatics and Automation Problems, National Academy
of Sciences of RA, 0014 Yerevan, Republic of Armenia}%
\thanks{email: rrkamalian@yahoo.com}}

% If you use the option headings,
% the title is also used as the running title,
% and the authors are also used as the running authors.
% You can change that by using \runtitle and \runauthor.

\runtitle{On one-sided interval edge colorings of biregular bipartite
graphs} \runauthor{R.R. Kamalian}

\begin{document}

% typeset front matter
\maketitle

\begin{abstract}
A proper edge $t$-coloring of a graph $G$ is a coloring of edges of
$G$ with colors $1,2,...,t$ such that all colors are used, and no
two adjacent edges receive the same color. The set of colors of
edges incident with a vertex $x$ is called a spectrum of $x$. An
arbitrary nonempty subset of consecutive integers is called an
interval. We say that a proper edge $t$-coloring of a graph $G$ is
interval in the vertex $x$ if the spectrum of $x$ is an interval. We
say that a proper edge $t$-coloring $\varphi$ of a graph $G$ is
interval on a subset $R_0$ of vertices of $G$, if for an arbitrary
$x\in R_0$, $\varphi$ is interval in $x$. We say that a subset $R$
of vertices of $G$ has an $i$-property if there is a proper edge
$t$-coloring of $G$ which is interval on $R$. If $G$ is a graph, and
a subset $R$ of its vertices has an $i$-property, then the minimum
value of $t$ for which there is a proper edge $t$-coloring of $G$
interval on $R$ is denoted by $w_R(G)$.

In this paper, for some bipartite graphs, we estimate the value of
this parameter in that cases when $R$ coincides with the set of all
vertices of one part of the graph.

\bigskip
Keywords: proper edge coloring; interval spectrum' biregular
bipartite graph
\bigskip
\end{abstract}

We consider undirected, finite graphs without loops and multiple
edges. $V(G)$ and $E(G)$ denote the sets of vertices and edges of a
graph $G$, respectively. For any vertex $x\in V(G)$, we denote by
$N_G(x)$ the set of vertices of a graph $G$ adjacent to $x$. The
degree of a vertex $x$ of a graph $G$ is denoted by $d_G(x)$, the
maximum degree of a vertex of $G$ by $\Delta(G)$. For a graph $G$
and an arbitrary subset $V_0\subseteq V(G)$, we denote by $G[V_0]$
the subgraph of $G$ induced  by the subset $V_0$ of its vertices.

Using a notation $G(X,Y,E)$ for a bipartite graph $G$, we mean that
$G$ has a bipartition $(X,Y)$, and $E=E(G)$.

An arbitrary nonempty subset of consecutive integers is called an
interval. An interval with the minimum element $p$ and the maximum
element $q$ is denoted by $[p,q]$.

A function $\varphi:E(G)\rightarrow [1,t]$ is called a proper edge
$t$-coloring of a graph $G$, if all colors are used, and no two
adjacent edges receive the same color.

The minimum $t\in \mathbb{N}$ for which there exists a proper edge
$t$-coloring of a graph $G$ is denoted by $\chi'(G)$ \cite{Vizing2}.

For a graph $G$ and any $t\in[\chi'(G),|E(G)|]$, we denote by
$\alpha(G,t)$ the set of all proper edge $t$-colorings of $G$. Let
$$
\alpha(G)\equiv\bigcup_{t=\chi'(G)}^{|E(G)|}\alpha(G,t).
$$

If $G$ is a graph, $x\in V(G)$, $\varphi\in \alpha(G)$, then let us
set 
\begin{equation*}
S_G(x,\varphi)\equiv\{\varphi(e)/ e\in E(G), e \textrm{ is
incident with } x\}.
\end{equation*}

We say that $\varphi\in \alpha(G)$ is persistent-interval in the
vertex $x_0\in V(G)$ of the graph $G$ iff
$S_G(x_0,\varphi)=[1,d_G(x_0)]$. We say that $\varphi\in \alpha(G)$
is persistent-interval on the set $R_0\subseteq V(G)$ iff $\varphi$
is persistent-interval in $\forall x\in R_0$.

We say that $\varphi\in \alpha(G)$ is interval in the vertex $x_0\in
V(G)$ of the graph $G$ iff $S_G(x_0,\varphi)$ is an interval. We say
that $\varphi\in \alpha(G)$ is interval on the set $R_0\subseteq
V(G)$ iff $\varphi$ is interval in $\forall x\in R_0$.

We say that a subset $R$ of vertices of a graph $G$ has an
$i$-property iff there exists $\varphi\in \alpha(G)$ interval on
$R$; for a subset $R\subseteq V(G)$ with an $i$-property, the
minimum value of $t$ warranting existence of $\varphi\in
\alpha(G,t)$ interval on $R$ is denoted by $w_R(G)$.

Notice that the problem of deciding whether the set of all vertices
of an arbitrary graph has an $i$-property is $NP$-complete
\cite{Oranj6,Asratian18,Diss7}. Unfortunately, even for an
arbitrary bipartite graph (in this case the interest is strengthened
owing to the application of an $i$-property in timetablings
\cite{Cambridge,Diss7}) the problem keeps the complexity of a
general case \cite{Asratian9,Giaro,Sev}. Some positive results
were obtained for graphs of certain classes with numerical or
structural restrictions \cite{Giaro_Diss,Giaro_Kubale,Kubale,Diss7,Preprint5,Axenovich,Petros_DMath,Petros_Vest,Hansen_Dip,Hans_Lot,Yang,Zhao_Chang}. The examples of bipartite graphs whose
sets of vertices have not an $i$-property are given in
\cite{Cambridge,Giaro_Kubale,Jensen_Toft,Sev}.

The subject of this research is a parameter $w_R(G)$ of a bipartite
graph $G=G(X,Y,E)$ in that case when $R$ coincides with the set of
all vertices of one part of $G$ (the exact value of this parameter
for an arbitrary bipartite graph is not known as yet). We obtain an
upper bound of the parameter being discussed for biregular
\cite{Asratian9,Asratian10,Asratian18,Asratian19,Piatkin20}
bipartite graphs, and the exact values of it in the case of the
complete bipartite graph $K_{m,n}$ $(m\in \mathbb{N}, n\in
\mathbb{N})$ as well.

The terms and concepts that we do not define can be found in
\cite{West1}.

First we recall some known results.

\begin{theorem} \cite{Oranj6,Asratian8,Diss7} \label{thm1}
If $R$ is the set of all vertices of one part of an arbitrary
bipartite graph $G=G(X,Y,E)$, then: 1) there exists
$\varphi\in\alpha(G,|E|)$ interval on $R$, 2) for $\forall
t\in[w_R(G),|E|]$, there exists $\psi_t\in\alpha(G,t)$ interval on
$R$.
\end{theorem}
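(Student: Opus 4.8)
The plan for part 1 is a direct construction. Take $R=X$ (the case $R=Y$ is symmetric). Order the vertices of $X$ as $x_1,\dots,x_n$ and set $s_0=0$, $s_i=\sum_{j\le i}d_G(x_j)$, so that $s_n=|E|$. Because $G$ is bipartite with parts $X,Y$, every edge has exactly one endpoint in $X$; I would therefore assign to the $d_G(x_i)$ edges incident with $x_i$ the $d_G(x_i)$ colors of the block $[s_{i-1}+1,s_i]$, in an arbitrary bijection. Every edge then receives exactly one color and distinct edges receive distinct colors, so the resulting map $\varphi$ is a bijection $E(G)\to[1,|E|]$; a bijective edge coloring is automatically proper and uses every color, hence $\varphi\in\alpha(G,|E|)$. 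Finally $S_G(x_i,\varphi)=[s_{i-1}+1,s_i]$ is an interval for each $i$, so $\varphi$ is interval on $R$, which proves 1).

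For part 2 I would first record a structural reformulation. Since a proper coloring uses each color at most once at a vertex, a spectrum $S_G(x,\varphi)$ of $x\in X$ that happens to be an interval is forced to be a translate $[a_x,a_x+d_G(x)-1]$. Thus a coloring interval on $X$ is nothing but a choice, for each $x\in X$, of an interval $I_x$ of length $d_G(x)$ together with a bijection from the edges at $x$ onto $I_x$, subject only to the constraint that edges meeting a common $y\in Y$ receive distinct colors. Let $T=\{t:\exists\varphi\in\alpha(G,t)\text{ interval on }R\}$. By definition $\min T=w_R(G)$, and by 1) we have $|E|\in T$ and hence $\max T=|E|$. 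To obtain 2) it then suffices to establish the one-step claim: if $t\in T$ and $t<|E|$, then $t+1\in T$. Iterating this from $t=w_R(G)$ fills the entire segment $[w_R(G),|E|]$.

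For the one-step claim, fix $\varphi\in\alpha(G,t)$ interval on $X$ with $t<|E|$. Counting each edge at its unique $X$-endpoint gives $\sum_{x}|I_x|=\sum_x d_G(x)=|E|$, while $\bigcup_x I_x=[1,t]$ because all $t$ colors are used; hence $\sum_x|I_x|>|\bigcup_x I_x|$, so the intervals $I_x$ overlap and some color is covered by at least two vertices. The plan is to translate the spectra of a carefully chosen set $S\subseteq X$ upward by one unit so that (i) the color $t+1$ becomes used, (ii) no color of $[1,t]$ is vacated, where the overlap just observed supplies a color with spare coverage, and (iii) properness on the $Y$-side is preserved. I expect (iii) to be the main obstacle: shifting the edges of one $x\in S$ up by one may collide at a neighbour $y\in Y$ with an edge already carrying the shifted color, so the shift must be propagated along an alternating chain through $G$, each step re-positioning one further spectrum, until the chain closes without creating a conflict or uncovering a color. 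Showing that such a chain can always be selected consistently, equivalently that the family of translations can be carried out simultaneously, is the crux; once this is in hand the resulting coloring lies in $\alpha(G,t+1)$ and is interval on $R$, completing the induction and hence 2).
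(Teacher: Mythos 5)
Your part 1 is correct and complete: ordering $X$ and giving the edges at $x_i$ the color block $[s_{i-1}+1,s_i]$ is exactly the standard argument (note that the paper itself only quotes Theorem \ref{thm1} from the cited literature and gives no proof, so your argument has to stand on its own). Your reduction of part 2 to the one-step claim ``$t\in T$, $t<|E|$ implies $t+1\in T$'' is also sound, as is the overlap count $\sum_x|I_x|=|E|>t$. But that one-step claim \emph{is} the theorem; everything before it is bookkeeping, and you explicitly leave its proof open (``\dots is the crux''). So as it stands the proposal proves 1) and only restates 2).

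Moreover, the mechanism you propose for the crux --- pick $S\subseteq X$, translate the spectra of the vertices in $S$ rigidly upward by one, and resolve conflicts by propagating the translation along chains --- fails already on the smallest example. Take $G=K_{2,2}$ with $X=\{x_1,x_2\}$, $Y=\{y_1,y_2\}$ and the coloring $\varphi(x_1y_1)=\varphi(x_2y_2)=1$, $\varphi(x_1y_2)=\varphi(x_2y_1)=2$, which is interval on $X$ with $t=2<4=|E|$. Translating $S=\{x_1\}$ forces the conflict $x_1y_1$ and $x_2y_1$ both colored $2$; translating $S=\{x_2\}$ forces the symmetric conflict at $y_2$; translating $S=\{x_1,x_2\}$ (which is what your chain propagation would do after either conflict) vacates color $1$. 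So \emph{no} family of simultaneous rigid translations works, even though a $3$-coloring interval on $X$ exists: $x_1y_2\mapsto 1$, $x_1y_1\mapsto 2$, $x_2y_2\mapsto 2$, $x_2y_1\mapsto 3$. In that coloring $x_2$'s spectrum is translated but $x_1$'s two edges are \emph{swapped inside its unchanged spectrum}. This is the missing idea: besides translating intervals one must re-assign edges to colors within spectra (Kempe-type alternating exchanges, or a Hall/K\"onig-type argument in the spirit of the Geller--Hilton result quoted as Theorem \ref{thm2} of the paper), and proving that such a combined recoloring always exists for every $t<|E|$ is precisely the nontrivial content of part 2 in the sources \cite{Oranj6,Asratian8,Diss7}. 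Without it, the induction never gets off the ground.
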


\begin{theorem} \cite{Oranj6,Asratian8,Asratian14} \label{thm2}
Let $G=G(X,Y,E)$ be a bipartite graph. If for $\forall e=(x,y)\in
E$, where $x\in X, y\in Y$, the inequality $d_G(y)\leq d_G(x)$ is
true, then $\exists\varphi\in\alpha(G,\Delta(G))$
persistent-interval on $X$.
\end{theorem}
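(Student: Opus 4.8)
The plan is to argue by induction on $\Delta(G)$, peeling off the largest color $\Delta(G)$ as a single well-chosen matching. Since $G$ is bipartite, K\"onig's theorem gives $\chi'(G)=\Delta(G)$, so proper edge $\Delta(G)$-colorings exist; the whole difficulty is to arrange the colors so that every $x\in X$ sees precisely the initial segment $[1,d_G(x)]$. Writing a proper $\Delta$-coloring as an ordered partition of $E$ into matchings $M_1,\dots,M_\Delta$ (with $M_i$ the edges colored $i$), being persistent-interval on $X$ is exactly the requirement that $M_i$ saturate a vertex $x\in X$ if and only if $i\le d_G(x)$. In the inductive step I would therefore assign color $\Delta$ to a matching $M$ that saturates every $x\in X$ with $d_G(x)=\Delta$ and no $x\in X$ with $d_G(x)<\Delta$, and then recolor $G-M$ with colors $1,\dots,\Delta-1$ via the induction hypothesis. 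The base case $\Delta(G)=1$ is immediate, since $G$ is then a matching and the single color works.

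To build $M$, set $X_\Delta=\{x\in X:d_G(x)=\Delta\}$ and $Y_\Delta=\{y\in Y:d_G(y)=\Delta\}$, and let $H$ be the bipartite subgraph consisting of all edges of $G$ incident with $X_\Delta$. Every vertex of $X_\Delta$ has degree $\Delta$ in $H$, while every $y\in Y$ has $H$-degree at most its $G$-degree, hence at most $\Delta$; thus $\Delta(H)=\Delta$. The hypothesis $d_G(y)\le d_G(x)$ forces each $y\in Y_\Delta$ to have all its neighbors in $X_\Delta$, so $d_H(y)=\Delta$ there, and conversely $d_H(y)=\Delta$ implies $d_G(y)=\Delta$; hence the degree-$\Delta$ vertices of $H$ are exactly $X_\Delta\cup Y_\Delta$. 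Applying K\"onig's theorem to $H$ yields a proper edge $\Delta$-coloring of $H$, and since a degree-$\Delta$ vertex meets edges of all $\Delta$ colors, every color class saturates each vertex of $X_\Delta\cup Y_\Delta$. I take $M$ to be any one color class: it saturates $X_\Delta\cup Y_\Delta$, and because $M\subseteq E(H)$ every edge of $M$ meets $X_\Delta$, so $M$ avoids all $X$-vertices of degree below $\Delta$.

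It remains to check that $G'=G-M$ satisfies the hypothesis with $\Delta(G')=\Delta-1$, after which the induction closes. Some $x\in X$ has degree $\Delta$ (if the maximum is attained at $y\in Y$, any neighbor of $y$ lies in $X$ and also has degree $\Delta$), so $M$ is nonempty and lowers that vertex to degree $\Delta-1$; every degree decreases by at most one, all $X$-degrees are now at most $\Delta-1$, and each $y\in Y_\Delta$ loses an edge, giving $\Delta(G')=\Delta-1$. For a surviving edge $(x,y)$ I would verify $d_{G'}(y)\le d_{G'}(x)$ by cases on whether $M$ saturates $x$ and $y$; the only delicate case is $x$ saturated and $y$ not, where $y\notin Y_\Delta$ gives $d_G(y)\le\Delta-1=d_{G'}(x)$. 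The induction hypothesis then supplies $\psi\in\alpha(G',\Delta-1)$ persistent-interval on $X$; coloring $M$ with color $\Delta$ and keeping $\psi$ elsewhere yields a proper $\Delta$-coloring of $G$ in which each $x\in X_\Delta$ has spectrum $[1,\Delta-1]\cup\{\Delta\}=[1,d_G(x)]$ while each $x\in X\setminus X_\Delta$ retains spectrum $[1,d_G(x)]$, so the coloring is persistent-interval on $X$ and uses all $\Delta$ colors.

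The main obstacle I anticipate is precisely the choice of $M$ together with the preservation of the hypothesis: one must ensure simultaneously that $M$ saturates all maximum-degree vertices of both parts (so that the maximum degree genuinely drops to $\Delta-1$ everywhere, keeping the hypothesis intact) while touching no low-degree vertex of $X$ (so that those spectra are not corrupted by the color $\Delta$). The auxiliary graph $H$, and the observation that a maximum-degree vertex of $Y$ has only maximum-degree neighbors, are what make these two competing demands compatible.
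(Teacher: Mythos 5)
The paper itself offers no proof of Theorem \ref{thm2}: it is stated as a known result with citations to external references, so there is no internal argument to measure yours against, and your proof must stand on its own --- which it does. Your induction on $\Delta=\Delta(G)$ is correct. The crucial construction, the auxiliary graph $H$ of all edges meeting $X_\Delta=\{x\in X:\ d_G(x)=\Delta\}$, does exactly what is needed: $\Delta(H)=\Delta$, so K\"onig's theorem gives a proper $\Delta$-edge-coloring of $H$ in which every color class saturates every degree-$\Delta$ vertex of $H$, and the hypothesis $d_G(y)\le d_G(x)$ enters precisely where you invoke it, namely to show that each $y$ with $d_G(y)=\Delta$ has all its neighbors in $X_\Delta$ and hence keeps degree $\Delta$ in $H$. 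A single color class $M$ then saturates $X_\Delta\cup Y_\Delta$ while avoiding every $x\in X$ of smaller degree, since every edge of $H$ has its $X$-end in $X_\Delta$. Your three verifications --- $X_\Delta\neq\emptyset$ (a maximum-degree vertex of $Y$ has only maximum-degree neighbors), $\Delta(G-M)=\Delta-1$, and preservation of the degree hypothesis on surviving edges, where the only nontrivial case ($x$ saturated by $M$, $y$ not) is settled by observing $y\notin Y_\Delta$ --- are all sound, and reinstalling $M$ as color $\Delta$ on top of the inductively obtained coloring of $G-M$ yields spectrum $[1,d_G(x)]$ at every $x\in X$, with all $\Delta$ colors used because $M\neq\emptyset$. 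This peel-off-a-matching argument is in the same spirit as the Geller--Hilton result that the paper's Remark attaches to Corollary \ref{cor1}, and it buys the reader a short, self-contained, elementary proof of a theorem the paper leaves to the literature; the only degenerate point worth a sentence is that isolated vertices of $X$ (for which $[1,d_G(x)]$ would be empty) must be excluded or treated as vacuous, an issue already present in the statement itself rather than in your argument.
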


\begin{corollary}\cite{Oranj6,Asratian8,Asratian14} \label{cor1}
Let $G=G(X,Y,E)$ be a bipartite graph. If $\max_{y\in
Y}d_G(y)\leq\min_{x\in X}d_G(x),$ then
$\exists\varphi\in\alpha(G,\Delta(G))$ persistent-interval on $X$.
\end{corollary}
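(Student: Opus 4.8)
The plan is to obtain this corollary as an immediate consequence of Theorem \ref{thm2}. Since the conclusion of the corollary is verbatim the conclusion of Theorem \ref{thm2}, it suffices to show that the hypothesis of the corollary, namely $\max_{y\in Y}d_G(y)\leq\min_{x\in X}d_G(x)$, implies the hypothesis of Theorem \ref{thm2}, namely that $d_G(y)\leq d_G(x)$ holds for \emph{every} edge $e=(x,y)\in E$ with $x\in X$ and $y\in Y$. Once this implication is in hand, Theorem \ref{thm2} directly supplies the desired $\varphi\in\alpha(G,\Delta(G))$ persistent-interval on $X$, and nothing further is needed.

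First I would fix an arbitrary edge $e=(x,y)\in E$ with $x\in X$ and $y\in Y$. Because $y$ is a member of $Y$, its degree is bounded above by the maximum degree over $Y$, and because $x$ is a member of $X$, its degree is bounded below by the minimum degree over $X$. Combining these two elementary bounds with the global hypothesis of the corollary yields the transitive chain
$$d_G(y)\leq\max_{y'\in Y}d_G(y')\leq\min_{x'\in X}d_G(x')\leq d_G(x),$$
whence $d_G(y)\leq d_G(x)$. As the edge $e$ was arbitrary, the edgewise inequality demanded by Theorem \ref{thm2} holds across all of $G$.

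I do not expect any genuine obstacle here: the corollary assumes a uniform separation between the degrees of the two parts, which is plainly stronger than the edge-local condition of Theorem \ref{thm2}, so the entire argument collapses to the single chain of inequalities above. The only point meriting a moment's care is the alignment of quantifiers, since Theorem \ref{thm2} requires the inequality at each individual edge; the chain is valid for every choice of $e$ precisely because $\max_{y\in Y}d_G(y)$ dominates every individual $d_G(y)$ while $\min_{x\in X}d_G(x)$ is dominated by every individual $d_G(x)$. With the hypothesis thus transferred, an appeal to Theorem \ref{thm2} completes the proof.
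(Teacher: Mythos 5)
Your proof is correct and matches the paper's intent exactly: the paper states this as a corollary of Theorem \ref{thm2} (with no written proof), and the derivation is precisely your observation that $d_G(y)\leq\max_{y'\in Y}d_G(y')\leq\min_{x'\in X}d_G(x')\leq d_G(x)$ for every edge $(x,y)$, so the edgewise hypothesis of Theorem \ref{thm2} holds. Nothing is missing.
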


\begin{remark}
Note that Corollary \ref{cor1} follows from the result of
\cite{GellerHilton}.
\end{remark}

Let $H=H(\mu,\nu)$ be a $(0,1)$-matrix with $\mu$ rows, $\nu$
columns, and with elements $h_{ij}$, $1\leq i\leq\mu$, $1\leq
j\leq\nu$. The $i$-th row of $H$, $i\in[1,\mu]$, is called
collected, iff $h_{ip}=h_{iq}=1$, $t\in[p,q]$ imply $h_{it}=1$, and
the inequality $\sum_{j=1}^\nu h_{ij}\geq 1$ is true. Similarly, the
$j$-th column of $H$, $j\in[1,\nu]$, is called collected, iff
$h_{pj}=h_{qj}=1$, $t\in[p,q]$ imply $h_{tj}=1$, and the inequality
$\sum_{i=1}^\mu h_{ij}\geq 1$ is true. If all rows and all columns
of $H$ are collected, then for $i$-th row of $H$, $i\in[1,\mu]$, we
define the number $\varepsilon(i,H)\equiv\min\{j/h_{ij}=1\}$.

$H$ is called a collected matrix, iff all its rows and all its
columns are collected, $h_{11}=h_{\mu\nu}=1$, and
$\varepsilon(1,H)\leq\varepsilon(2,H)\leq\cdots\leq
\varepsilon(\mu,H)$.

$H$ is called a $b$-regular matrix $(b\in \mathbb{N})$, iff for
$\forall i\in[1,\mu]$, $\sum_{j=1}^\nu h_{ij}=b$. $H$ is called a
$c$-compressed matrix $(c\in \mathbb{N})$, iff for $\forall
j\in[1,\nu]$, $\sum_{i=1}^\mu h_{ij}\leq c$.

\begin{lemma} \cite{Vestnik15} \label{lem1}
If a collected $n$-regular $(n\in \mathbb{N})$ matrix $P=P(m,w)$
with elements $p_{ij}$ $(1\leq i\leq m, 1\leq j\leq w)$ is
$n$-compressed, then $w\geq\big\lceil\frac{m}{n}\big\rceil\cdot n$.
\end{lemma}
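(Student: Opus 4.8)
The plan is to exploit the interval structure of the rows rather than rely on a crude count of ones. Since $P$ is collected and $n$-regular, each row $i$ consists of exactly $n$ consecutive $1$'s, namely the block of columns $[\varepsilon(i,P),\varepsilon(i,P)+n-1]$. Because this block must lie inside the $w$ available columns, every starting column satisfies $\varepsilon(i,P)\le w-n+1$. A naive count (the $mn$ ones are distributed among $w$ columns, each holding at most $n$) yields only $w\ge m$, which is too weak; the point is to extract the stronger bound from the placement of the length-$n$ blocks.

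The key step is to read off the compression condition only along the columns whose index is a multiple of $n$. For an integer $t$ with $tn\le w$, I would observe that row $i$ has a $1$ in column $tn$ precisely when its block contains $tn$, i.e. when $\varepsilon(i,P)\in[(t-1)n+1,tn]$; indeed a block of length $n$ covers $tn$ iff its left endpoint lies in that window of width $n$. Thus the number of rows whose starting column falls in the block $[(t-1)n+1,tn]$ equals the number of $1$'s in column $tn$, which is at most $n$ by $n$-compression. These windows $[(t-1)n+1,tn]$, for $t=1,2,\dots$, partition the integers into consecutive blocks of size $n$, and a short floor computation shows that the whole range $[1,w-n+1]$ of admissible starting columns is covered by the blocks with $t\le\lfloor w/n\rfloor$.

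Putting these together, each of the $m$ rows contributes its starting column to exactly one of the blocks indexed by $t\in\{1,\dots,\lfloor w/n\rfloor\}$, and each such block receives at most $n$ rows; hence $m\le\lfloor w/n\rfloor\cdot n$. Consequently $\lfloor w/n\rfloor\ge\lceil m/n\rceil$, since $\lfloor w/n\rfloor$ is an integer that is at least $m/n$, and therefore $w\ge\lfloor w/n\rfloor\cdot n\ge\lceil m/n\rceil\cdot n$, as required.

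I expect the main obstacle to be the right choice of sampling: evaluating the compression inequality at arbitrary columns produces overlapping sliding windows that double-count rows and lose the factor one needs, whereas sampling exactly at the multiples of $n$ converts these windows into a genuine partition of the starting columns. The only remaining technical care is the verification that the last block meeting $[1,w-n+1]$ has index exactly $\lfloor w/n\rfloor$, a routine check obtained by writing $w=qn+r$ with $0\le r<n$.
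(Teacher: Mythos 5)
Your proof is correct, and it takes a genuinely different route from the paper's. The paper argues by induction on $\big\lceil\frac{m}{n}\big\rceil$: the key step shows $\varepsilon(n+1,P)\geq n+1$ (otherwise rows $1,\ldots,n+1$ would all cover column $n$, contradicting $n$-compression), after which the first $n$ rows and the first $\varepsilon(n+1,P)-1$ columns are deleted and the induction hypothesis is applied to the resulting smaller collected $n$-regular $n$-compressed matrix. Your argument is the non-inductive, ``all at once'' version of the same core observation: sampling the compression inequality only at the columns $n,2n,\ldots,\lfloor w/n\rfloor\cdot n$ shows that at most $n$ rows can have their starting column in each window $[(t-1)n+1,tn]$, and since these disjoint windows cover the whole admissible range $[1,w-n+1]$ of starting columns, you get $m\leq\lfloor w/n\rfloor\cdot n$ in one step (note that the paper's claim $\varepsilon(n+1,P)\geq n+1$ is exactly your $t=1$ case). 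Your approach buys two things. First, it avoids the matrix-surgery step, where one must verify that the truncated matrix is still collected --- in particular that no column becomes empty and that the corner and monotonicity conditions survive deletion --- a verification the paper passes over with ``it is not difficult to see,'' and which in fact requires the ordering $\varepsilon(1,P)\leq\cdots\leq\varepsilon(m,P)$. Second, it uses strictly fewer hypotheses: you never invoke column-collectedness, the monotone ordering of the $\varepsilon(i,P)$, or the corner conditions $p_{11}=p_{mw}=1$, so your argument proves the bound for any $(0,1)$-matrix whose rows are intervals of exactly $n$ ones and whose columns contain at most $n$ ones. The paper's induction, by contrast, stays within the narrower class of collected matrices in which the lemma is stated.
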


\begin{proof}
We use induction on $\big\lceil\frac{m}{n}\big\rceil$.

If $\big\lceil\frac{m}{n}\big\rceil=1$, the statement is trivial.

Now assume that $\big\lceil\frac{m}{n}\big\rceil=\lambda_0\geq 2$,
and the statement is true for all collected $n'$-regular
$n'$-compressed matrixes $P'(m',w')$ with
$\big\lceil\frac{m'}{n'}\big\rceil\leq\lambda_0-1$.

First of all let us prove that $\varepsilon(n+1,P)\geq n+1$. Assume
the contrary: $\varepsilon(n+1,P)\leq n$. Since $P$ is a collected
$n$-regular matrix, we obtain $\sum_{i=1}^m
p_{in}\geq\sum_{i=1}^{n+1} p_{in}\geq n+1$, which is impossible
because $P(m,w)$ is an $n$-compressed matrix. This contradiction
shows that $\varepsilon(n+1,P)\geq n+1$.

Now let us form a new matrix $P'(m-n,w-(\varepsilon(n+1,P)-1))$ by
deleting from the matrix $P$ the elements $p_{ij}$, which satisfy at
least one of the inequalities $i\leq n$, $j\leq
\varepsilon(n+1,P)-1$.

It is not difficult to see that $P'(m-n,w-(\varepsilon(n+1,P)-1))$
is a collected $n$-regular $n$-compressed matrix with
$\big\lceil\frac{m-n}{n}\big\rceil=\lambda_0-1$. By the induction
hypothesis, we have
$$
w-(\varepsilon(n+1,P)-1)\geq\bigg\lceil\frac{m-n}{n}\bigg\rceil\cdot
n,
$$
which means that
$$
w\geq(\lambda_0-1)n+\varepsilon(n+1,P)-1\geq(\lambda_0-1)n+n=
\lambda_0n=\bigg\lceil\frac{m}{n}\bigg\rceil\cdot n.
$$
\end{proof}

Now, for arbitrary positive integers $m,l,n,k$, where $m\geq n$ and
$ml=nk$, let us define the class $Bip(m,l,n,k)$ of biregular
bipartite graphs:
$$
Bip(m,l,n,k)\equiv\left\{G=G(X,Y,E)\bigg/
\begin{array}{lc} |X|=m, |Y|=n, \textrm{ for }  \forall
x\in X, d_G(x)=l, \\ \textrm{ for } \forall y\in Y, d_G(y)=k.
\end{array}
\right\}
$$

\begin{remark} \label{rem1}
Clearly, if $G\in Bip(m,l,n,k)$, then $\chi'(G)=k$.
\end{remark}

\begin{theorem} \label{thm3}
If $G=G(X,Y,E)\in Bip(m,l,n,k)$, then $w_Y(G)=k$, $w_X(G)\leq
l\cdot\big\lceil\frac{m}{l}\big\rceil$.
\end{theorem}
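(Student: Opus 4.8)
The plan is to treat the two claims separately: the identity $w_Y(G)=k$ is essentially immediate, while the bound on $w_X(G)$ comes from an explicit block construction together with König's bipartite edge-coloring theorem.

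For $w_Y(G)=k$, I first record that $m\ge n$ and $ml=nk$ force $l\le k$, so $\Delta(G)=k$ and, as in Remark~\ref{rem1}, $\chi'(G)=k$. Every proper edge coloring uses at least $\chi'(G)=k$ colors, so $w_Y(G)\ge k$. Conversely, take any $\varphi\in\alpha(G,k)$. Each $y\in Y$ has degree $k$, and its $k$ incident edges receive $k$ distinct colors from $[1,k]$, so $S_G(y,\varphi)=[1,k]$ is an interval. Thus $\varphi$ is interval on $Y$ and $w_Y(G)\le k$, giving equality.

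For the bound on $w_X(G)$, set $q=\lceil m/l\rceil$ and $t=lq$, and split $[1,t]$ into $q$ consecutive blocks $B_b=[(b-1)l+1,\,bl]$, $b\in[1,q]$, each of size $l$. I partition $X$ into $q$ nonempty groups $X_1,\dots,X_q$ with $|X_b|\le l$ (possible since $q\le m\le lq$), and I aim to color every edge incident to $X_b$ using only the colors of $B_b$. For each $b$ consider the bipartite subgraph $G_b=G[X_b\cup Y]$: each $x\in X_b$ retains its full degree $l$ in $G_b$, while each $y\in Y$ has degree at most $|X_b|\le l$ there, so $\Delta(G_b)=l$. Hence $G_b$ admits a proper edge $l$-coloring, which I realize with the colors of $B_b$; and since every $x\in X_b$ has degree exactly $l$ equal to the number of available colors, its spectrum in this coloring is forced to be all of $B_b$, an interval.

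Assembling these $q$ colorings gives a single $\varphi$ on $E$, because the groups partition $X$ and so each edge lies in exactly one $G_b$. I would then verify properness: two edges meeting at a vertex of $X$ lie in the same $G_b$ and differ there, while two edges meeting at $y\in Y$ either lie in the same $G_b$ (and differ) or in distinct groups using disjoint color blocks; since each nonempty $X_b$ exhausts $B_b$, all $t$ colors occur, so $\varphi\in\alpha(G,t)$ is interval on $X$ and $w_X(G)\le l\lceil m/l\rceil$. The degree count in $G_b$ and the global properness check are routine; the one genuine point is the interplay that restricting a group to a block of exactly $l$ colors is both enough to edge-color $G_b$ (via $\Delta(G_b)=l$ and bipartiteness) and few enough to force the degree-$l$ vertices of $X_b$ to acquire the full interval $B_b$, while disjointness of the blocks makes inter-group conflicts at $Y$ vanish automatically. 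I do not expect to need Lemma~\ref{lem1} for this direction.
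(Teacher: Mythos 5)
Your proposal is correct and takes essentially the same route as the paper: the identical partition of $X$ into $\big\lceil\frac{m}{l}\big\rceil$ groups of size at most $l$, a proper edge $l$-coloring of each induced subgraph realized on its own disjoint block of $l$ colors, and the same shift-and-assemble step (the paper's $\psi(e)=(\xi(e)-1)l+\varphi_{\xi(e)}(e)$), with the equality $w_Y(G)=k$ handled exactly as you do via $\chi'(G)=k$ and the forced full spectrum at degree-$k$ vertices of $Y$. The only difference is how the block colorings are certified: the paper invokes Corollary~\ref{cor1} to obtain colorings persistent-interval on each group, while you use K\"onig's theorem together with the observation that a degree-$l$ vertex must see all $l$ available colors --- which is precisely what Corollary~\ref{cor1} reduces to in this degenerate case, so the two justifications coincide.
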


\begin{proof}
The equality follows from Remark \ref{rem1}. Let us prove the
inequality.

Let $X=\{x_1,\ldots,x_m\}$. For $\forall
r\in[1,\lfloor\frac{m}{l}\rfloor]$, define
$X_r\equiv\{x_{(r-1)l+1},\ldots,x_{rl}\}$. Define
$X_{1+\lfloor\frac{m}{l}\rfloor}\equiv
X\backslash\Big(\bigcup_{i=1}^{\lfloor\frac{m}{l}\rfloor}X_i\Big)$.
For $\forall r\in[1,\lfloor\frac{m}{l}\rfloor]$, define
$Y_r\equiv\bigcup_{x\in X_r}N_G(x)$. Define
$Y_{1+\lfloor\frac{m}{l}\rfloor}\equiv \bigcup_{x\in
X_{1+\lfloor\frac{m}{l}\rfloor}}N_G(x)$. For $\forall
r\in[1,\big\lceil\frac{m}{l}\big\rceil]$, define $G_r\equiv
G[X_r\cup Y_r]$.

Consider the sequence $G_1,G_2,\ldots,G_{\lceil\frac{m}{l}\rceil}$
of subgraphs of the graph $G$. From Corollary \ref{cor1}, we obtain
that for $\forall i\in[1,\lceil\frac{m}{l}\rceil]$, there is
$\varphi_i\in\alpha(G_i,l)$ persistent-interval on $X_i$.

Clearly, for $\forall e\in E(G)$, there exists the unique $\xi(e)$,
satisfying the conditions $\xi(e)\in[1,\lceil\frac{m}{l}\rceil]$ and
$e\in E(G_{\xi(e)})$.

Define a function $\psi:E(G)\rightarrow
[1,l\cdot\big\lceil\frac{m}{l}\big\rceil]$. For an arbitrary $e\in
E(G)$, set $\psi(e)\equiv(\xi(e)-1)\cdot l+\varphi_{\xi(e)}(e)$.

It is not difficult to see that
$\psi\in\alpha(G,l\cdot\big\lceil\frac{m}{l}\big\rceil)$ and $\psi$
is interval on $X$. Hence, $w_X(G)\leq
l\cdot\big\lceil\frac{m}{l}\big\rceil$.
\end{proof}

\begin{theorem} \label{thm4}
Let $R$ be the set of all vertices of one part of the complete
bipartite graph $G=K_{m,n}$, where $m\in \mathbb{N}$, $n\in
\mathbb{N}$. Then
$$
w_R(G)=(m+n-|R|)\cdot\bigg\lceil\frac{|R|}{m+n-|R|}\bigg\rceil.
$$
\end{theorem}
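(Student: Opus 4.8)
The plan is to prove the theorem by reducing it to the complete bipartite graph's membership in the class $Bip(m,l,n,k)$ and then matching the upper bound from Theorem~\ref{thm3} with a lower bound derived from Lemma~\ref{lem1}. First I would fix notation: set $X$ and $Y$ to be the two parts of $K_{m,n}$ with $|X|=m$, $|Y|=n$, and observe that $R$ coincides with one of these parts. By symmetry of the complete bipartite graph, it suffices to treat the case $R=X$, where $|R|=m$ and $m+n-|R|=n$; the formula then reads $w_X(K_{m,n})=n\cdot\lceil m/n\rceil$. Since in $K_{m,n}$ every vertex of $X$ has degree $n$ and every vertex of $Y$ has degree $m$, we have $K_{m,n}\in Bip(m,n,n,m)$ precisely when $m\geq n$ (so that the role of $l$ is played by $n$ and $k$ by $m$), and the product condition $m\cdot n=n\cdot m$ holds automatically.

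For the upper bound, I would apply Theorem~\ref{thm3} directly with the identification $l=n$, yielding $w_X(K_{m,n})\leq n\cdot\lceil m/n\rceil$, which is exactly the claimed value. The substantive work is the matching lower bound: I must show that no proper edge coloring interval on $X$ can use fewer than $n\cdot\lceil m/n\rceil$ colors. Here is where Lemma~\ref{lem1} enters. Given any $\varphi\in\alpha(K_{m,n},t)$ interval on $X$, I would encode the coloring as a $(0,1)$-matrix: index rows by the vertices of $X$ (so $\mu=m$) and columns by the colors $1,\ldots,t$ (so $\nu=t$), placing a $1$ in position $(i,j)$ exactly when color $j$ appears in the spectrum of vertex $x_i$. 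Because $\varphi$ is interval on $X$, each row is a block of consecutive $1$'s, hence collected; because each $x_i$ has degree $n$, the matrix is $n$-regular. The key observation is that in $K_{m,n}$, each color class is a matching that, if it is to be extendable to a proper coloring with all vertices of $X$ reaching that color, can cover at most $n$ vertices of $X$ — giving that each column has at most $n$ ones, i.e. the matrix is $n$-compressed.

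The main obstacle, and the step requiring the most care, is verifying that the matrix I construct genuinely satisfies all the hypotheses of a \emph{collected} matrix and is $n$-compressed, since Lemma~\ref{lem1} demands the full collected structure (all rows and columns collected, $h_{11}=h_{\mu\nu}=1$, and the monotonicity of the $\varepsilon$ values). The row-collectedness and $n$-regularity follow from the interval property and the degree condition, but I must additionally arrange the ordering of the rows so that $\varepsilon(1,H)\leq\cdots\leq\varepsilon(m,H)$, reorder or trim columns so that color $1$ and color $t$ are genuinely used, and argue column-collectedness; column-collectedness and $n$-compression together need an argument that each color appears at most $n$ times among $X$-vertices and that these appearances, within the interval structure forced on the graph, occupy a consecutive block — this last point is the delicate combinatorial heart. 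Once the matrix is shown to be collected, $n$-regular, and $n$-compressed, Lemma~\ref{lem1} gives $\nu=t\geq\lceil m/n\rceil\cdot n$. Since this holds for every interval coloring, we conclude $w_X(K_{m,n})\geq n\cdot\lceil m/n\rceil$, and combining with the upper bound completes the proof; the case $R=Y$ follows by interchanging the roles of $m$ and $n$.
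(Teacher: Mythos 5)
Your overall strategy coincides with the paper's: the upper bound is taken from Theorem~\ref{thm3}, and the lower bound comes from encoding a coloring interval on $X$ as a $(0,1)$-matrix (rows indexed by $X$, columns by colors, rows sorted by the minimum of the spectrum) and invoking Lemma~\ref{lem1}. However, there is one step in your write-up that genuinely fails: the symmetry reduction. You reduce everything to the single case $R=X$ with $m,n$ arbitrary, but your upper bound for that case rests on the membership $K_{m,n}\in Bip(m,n,n,m)$, which---as you yourself note---holds only when $m\geq n$, since the class $Bip(m,l,n,k)$ is defined only for $m\geq n$. So when $m<n$, i.e.\ when $R$ is the strictly smaller part, your application of Theorem~\ref{thm3} is outside its hypotheses and that case is left unproved; it cannot be recovered by ``interchanging the roles of $m$ and $n$,'' because that interchange maps the unhandled case to itself (it always turns the smaller-part case into a smaller-part case). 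The paper avoids this by fixing $m\geq n$ without loss of generality and treating \emph{both} parts: for $R=Y$ (the small part) the formula degenerates to $m\cdot\lceil n/m\rceil=m$, and the claim follows from the \emph{equality} $w_Y(G)=k$ of Theorem~\ref{thm3}, which comes from Remark~\ref{rem1} ($\chi'(G)=k$), not from the inequality. The repair in your setup is short: if $m<n$, then $\lceil m/n\rceil=1$, every vertex of $X$ has degree $n=\chi'(K_{m,n})$, so in any proper edge $n$-coloring each $x\in X$ has spectrum exactly $[1,n]$; hence $w_X(G)\leq n$, and $\Delta(G)=n$ (or your matrix argument) gives the matching lower bound.

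Two smaller points. First, the column-collectedness that you flag as ``the delicate combinatorial heart'' is immediate once the rows are sorted and one uses that all spectra are intervals of the \emph{same} length $n$: if $j\in S_G(x_a,\varphi)$ and $j\in S_G(x_c,\varphi)$ with $a<b<c$, then $\min S_G(x_b,\varphi)\leq\min S_G(x_c,\varphi)\leq j$ and $j\leq\min S_G(x_a,\varphi)+n-1\leq\min S_G(x_b,\varphi)+n-1$, so $j\in S_G(x_b,\varphi)$. Moreover, no ``trimming'' of columns is ever needed: the paper's definition of a proper edge $t$-coloring requires all $t$ colors to be used, which already gives nonempty columns and, combined with the sorting, $h_{11}=h_{\mu\nu}=1$. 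Second, your justification of $n$-compression should be stated unconditionally, without the clause about extendability: each color class is a matching of $K_{m,n}$, distinct edges of a matching meet distinct vertices of $Y$, so any color lies in at most $|Y|=n$ of the spectra $S_G(x_i,\varphi)$, $x_i \in X$.
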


\begin{proof}
Without loss of generality we can assume that $G$ has a bipartition
$(X,Y)$, where $X=\{x_1,\ldots,x_m\}$, $Y=\{y_1,\ldots,y_n\}$, and
$m\geq n$.

\case{1} $R=Y$. In this case the statement follows from Theorem
\ref{thm3}; thus $w_Y(G)=m$.

\case{2} $R=X$.

The inequality $w_X(G)\leq n\cdot\big\lceil\frac{m}{n}\big\rceil$
follows from Theorem \ref{thm3}. Let us prove that $w_X(G)\geq
n\cdot\big\lceil\frac{m}{n}\big\rceil$.

Consider an arbitrary proper edge $w_X(G)$-coloring $\varphi$ of the
graph $G$, which is interval on $X$.

Clearly, without loss of generality, we can assume that
$$
\min(S_G(x_1,\varphi))\leq\min(S_G(x_2,\varphi))\leq\ldots\leq\min(S_G(x_m,\varphi)).
$$

Let us define a $(0,1)$-matrix $P(m,w_X(G))$ with $m$ rows, $w_X(G)$
columns, and with elements $p_{ij}$, $1\leq i\leq m$, $1\leq j\leq
w_X(G)$. For $\forall i\in[1,m]$, and for $\forall j\in[1,w_X(G)]$,
set
$$
p_{ij}= \left\{
\begin{array}{lc}
1, & \textrm{if }
j\in S_G(x_i,\varphi) \\
0, & \textrm{if } j\not\in S_G(x_i,\varphi).
\end{array}
\right.
$$

It is not difficult to see that $P(m,w_X(G))$ is a collected
$n$-regular $n$-compressed matrix. From Lemma \ref{lem1}, we obtain
$w_X(G)\geq n\cdot\big\lceil\frac{m}{n}\big\rceil$.
\end{proof}

From Theorems \ref{thm1} and \ref{thm3}, taking into account the
proof of Case 2 of Theorem \ref{thm4}, we also obtain

\begin{theorem}
If $G\in Bip(m,l,n,k)$, then
\begin{enumerate}
\item for $\forall t\in
\Big[l\cdot\big\lceil\frac{m}{l}\big\rceil,ml\Big]$, there exists
$\varphi_t\in\alpha(G,t)$ interval on $X$,
\item for $\forall t\in [k,nk]$, there exists $\psi_t\in\alpha(G,t)$
interval on $Y$.
\end{enumerate}
\end{theorem}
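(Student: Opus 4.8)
The plan is to obtain both assertions as immediate consequences of Theorem~\ref{thm1}(2) and Theorem~\ref{thm3}, once I record that biregularity forces $|E|=ml=nk$: summing degrees over $X$ gives $\sum_{x\in X}d_G(x)=ml$, summing over $Y$ gives $nk$, and these agree by hypothesis. Hence I may write $|E|=ml$ or $|E|=nk$ interchangeably, matching whichever side I analyze. Theorem~\ref{thm1}(1) already certifies that each part is a set with the $i$-property, so $w_X(G)$ and $w_Y(G)$ are well defined and Theorem~\ref{thm1}(2) is available.

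I would dispose of assertion~2 first, since there the range is exact. Theorem~\ref{thm3} gives $w_Y(G)=k$, so Theorem~\ref{thm1}(2) applies verbatim: for every $t\in[w_Y(G),|E|]=[k,nk]$ there is $\psi_t\in\alpha(G,t)$ interval on $Y$. This is precisely the claimed interval, so nothing further is needed.

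For assertion~1 I would run the same argument but use only the upper estimate in Theorem~\ref{thm3}, namely $w_X(G)\le l\cdot\lceil m/l\rceil$. Theorem~\ref{thm1}(2) produces a coloring interval on $X$ for every $t\in[w_X(G),|E|]=[w_X(G),ml]$, and since $w_X(G)\le l\cdot\lceil m/l\rceil$ the target interval $[\,l\cdot\lceil m/l\rceil,\,ml\,]$ sits inside $[w_X(G),ml]$. Hence each such $t$ admits $\varphi_t\in\alpha(G,t)$ interval on $X$. The key observation is that I never need the exact value of $w_X(G)$: it suffices that it not exceed the left endpoint of the target range, which is exactly what Theorem~\ref{thm3} delivers.

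Thus the proof is purely assembly, and the only genuinely delicate point—the reason the proof of Case~2 of Theorem~\ref{thm4} is worth keeping in mind—is the \emph{status} of the left endpoint $l\cdot\lceil m/l\rceil$ in assertion~1. Unlike the $Y$-side, this endpoint is only an upper bound for $w_X(G)$, not its exact value, so assertion~1 is existence over a \emph{safe} range rather than over the tight interval $[w_X(G),|E|]$. The matrix encoding of Case~2 (turn an interval-on-$X$ coloring, with rows sorted by $\min S_G(x_i,\varphi)$, into a collected $(0,1)$-matrix and invoke Lemma~\ref{lem1}) is exactly what one would use to test tightness. For the complete bipartite graph one has $l=n$, so the matrix is $n$-regular and $n$-compressed and Lemma~\ref{lem1} forces equality; but in the general biregular case $l\le n$ with possibly $l<n$, that matrix is $l$-regular yet only $n$-compressed, so Lemma~\ref{lem1} no longer applies with parameter $l$, and the crude count $ml\le w\cdot n$ yields nothing beyond the trivial $w_X(G)\ge k$. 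I therefore expect the left endpoint to be non-tight in general and would present assertion~1 strictly as an existence statement over the safe range, making no claim that $l\cdot\lceil m/l\rceil$ equals $w_X(G)$.
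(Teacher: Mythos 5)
Your proof is correct and takes essentially the same route as the paper, which likewise obtains both assertions by combining Theorem~\ref{thm1}(2) with Theorem~\ref{thm3} (using $|E|=ml=nk$): the exact value $w_Y(G)=k$ handles the $Y$-side, and the upper bound $w_X(G)\le l\cdot\big\lceil\frac{m}{l}\big\rceil$ handles the $X$-side over the safe range. Your closing observation is also apt: the matrix argument from Case~2 of Theorem~\ref{thm4}, which the paper cites in passing, only certifies tightness of the left endpoint when $l=n$ (the complete bipartite case), since Lemma~\ref{lem1} needs the regularity and compression parameters to coincide.
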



\begin{thebibliography}{99}

\bibitem{Asratian14} {A.S.~Asratian}
                {\it Investigation of some mathematical model of Scheduling Theory},
                Doctoral dissertation, Moscow University, 1980 (in Russian).

\bibitem{Asratian18} {A.S.~Asratian, C.J.~Casselgren}
                {\it A sufficient condition for interval edge co\-lo\-rings of $(4,3)$-biregular bipartite graphs},
                Research report LiTH-MAT-R-2006-07, Link\"{o}ping University, 2006.

\bibitem{Asratian9} {A.S.~Asratian, C.J.~Casselgren}
                {\it Some results on interval edge colorings of $(\alpha,\beta)$-biregular bipartite graphs},
                Research report LiTH-MAT-R-2006-09, Link\"{o}ping University, 2006.

\bibitem{Asratian10} {A.S.~Asratian, C.J.~Casselgren}
                {\it On interval edge colorings of $(\alpha,\beta)$-biregular bipartite graphs},
                Discrete Math 307 (2007), pp. 1951--1956.

\bibitem{Asratian19} {A.S.~Asratian, C.J.~Casselgren, J.~Vandenbussche, D.B.~West}
                {\it Proper path-factors and interval edge-coloring of $(3,4)$-biregular bigraphs},
                J. of Graph Theory 61 (2009), pp. 88--97.

\bibitem{Cambridge} {A.S.~Asratian, T.M.J.~Denley, R.~Haggkvist} {\it Bipartite graphs and their
                applications}, Cambridge Tracts in Mathematics, 131, Cambridge University Press, 1998.

\bibitem{Oranj6} {A.S.~Asratian, R.R.~Kamalian}
                {\it Interval colorings of edges of a multigraph}, Appl. Math. 5 (1987),
               Yerevan State University, pp. 25--34 (in Russian).

\bibitem{Asratian8} {A.S.~Asratian, R.R.~Kamalian}
                {\it Investigation of interval edge-colorings of graphs}, Journal of Combinatorial Theory. Series B 62 (1994),
                No.1, pp. 34--43.

\bibitem{Axenovich} {M.A.~Axenovich} {\it On interval colorings of planar graphs},
                Congr. Numer. 159 (2002), pp. 77--94.

\bibitem{GellerHilton} {D.P.~Geller and A.J.W.~Hilton} {\it How to color the lines of a bigraph},
                    Networks, 4(1974), pp. 281--282.

\bibitem{Giaro_Diss} {K.~Giaro } {\it Compact task scheduling on dedicated processors with no
                waiting periods}, PhD thesis, Technical University of Gdansk, EIT faculty, Gdansk, 1999 (in Polish).

\bibitem{Giaro} {K.~Giaro } {\it The complexity of consecutive $\Delta $-coloring of bipartite graphs: $4$ is easy,
                $5$ is hard}. Ars Combin. 47(1997), 287--298.

\bibitem{Giaro_Kubale} {K.~Giaro , M.~Kubale and M.~Malafiejski} {\it On the deficiency of
                bipartite graphs}, Discrete Appl. Math. 94 (1999), 193--203.

\bibitem{Hansen_Dip} {H.M.~Hansen } {\it Scheduling with minimum waiting periods}, Master's Thesis, Odense University,
                Odense, Denmark, 1992 (in Danish).

\bibitem{Hans_Lot} {D.~Hanson, C.O.M.~Loten, B.~Toft} {\it On interval colorings of bi-regular bipartite graphs},
                Ars Combin. 50(1998), 23--32.

\bibitem{Jensen_Toft} {T.R.~Jensen ,B.~Toft} {\it Graph Coloring Problems}, Wiley Interscience Series in Discrete
                Mathematics and Optimization, 1995.

\bibitem{Diss7} {R.R.~Kamalian} {\it Interval Edge Colorings of Graphs},
                Doctoral dissertation, the Institute of Mathematics of the Siberian Branch of
                the Academy of Sciences of USSR, Novosibirsk, 1990 (in Russian).

\bibitem{Vestnik15} {R.R.~Kamalian} {\it On one-sided interval colorings of bipartite graphs},
                    the Herald of the RAU, No.2, Yerevan, 2010, pp. 3--11 (in Russian).

\bibitem{Preprint5} {R.R.~Kamalian} {\it Interval colorings of complete bipartite graphs and trees},
                Preprint of the Computing Centre of the Academy of Sciences of Armenia, Yerevan, 1989 (in Russian).

\bibitem{Kubale} {M.~Kubale} {\it Graph Colorings}, American Mathematical Society, 2004.

\bibitem{Petros_DMath} {P.A.~Petrosyan} {\it Interval edge-colorings of complete graphs and
                $n$-dimensional cubes}, Discrete Math. 310 (2010), 1580--1587.

\bibitem{Petros_Vest} {P.A.~Petrosyan} {\it On interval edge-colorings of
                multigraphs}, The Herald of the RAU, No1, Yerevan, 2011, pp. 12--21 (in Russian).

\bibitem{Piatkin20} {A.V.~Pyatkin} {\it Interval coloring of $(3,4)$-biregular bipartite graphs having
                large cubic subgraphs}, J. of Graph Theory 47 (2004), pp. 122--128.

\bibitem{Sev} {S.V.~Sevast'janov} {\it Interval colorability of the edges of a bipartite graph},
                Metody Diskret. Analiza 50(1990), 61--72 (in Russian).

\bibitem{Vizing2}  {V.G.~Vizing} {\it The chromatic index of a multigraph}, Kibernetika 3 (1965), pp. 29--39.

\bibitem{West1}  {D.B.~West} {\it Introduction to Graph Theory}, Prentice-Hall, New Jersey, 1996.

\bibitem{Yang} {F.~Yang , X.~Li} {\it Interval coloring of $(3,4)$-biregular
                bigraphs having two $(2,3)$-biregular bipartite subgraphs}, Appl.
                Math. Letters 24(2011), pp. 1574--1577.

\bibitem{Zhao_Chang} {Y.~Zhao and J.G.~Chang} {\it Consecutive Edge-Colorings of Generalized
                    $\theta$-Graphs}, J. Akiyama et al. (Eds.): CGGA 2010, LNCS 7033, 2011, pp. 214--225.

\end{thebibliography}
\end{document}